\documentclass[letterpaper, abstracton, DIV11, 12pt, titlepage=false]{scrartcl}

\usepackage[ruled]{algorithm2e}

\SetAlFnt{\small}
\SetAlCapFnt{\small}
\SetAlCapNameFnt{\small}
\SetAlCapHSkip{0pt}
\IncMargin{-\parindent}
\let\chapter\undefined

\usepackage{enumerate}
\usepackage[round]{natbib}
\usepackage[english]{babel}
\usepackage{amsfonts}
\usepackage{amsmath}
\usepackage[usenames,dvipsnames]{xcolor}
\usepackage{amssymb}
\usepackage{stmaryrd}
\usepackage{amsthm}
\usepackage{mathabx}
\usepackage{multirow}
\usepackage{multicol}
\usepackage{colortbl}
\usepackage{dsfont}
\usepackage{graphicx}
\usepackage{pifont}
\usepackage{bm}
\usepackage{pdfsync}
\usepackage{wrapfig}
\usepackage{microtype}
\usepackage{setspace}
\usepackage{pgf}
\usepackage{tikz}
\usetikzlibrary{arrows,automata}
\usepackage[colorlinks=true,bookmarks=true, pdftex, citecolor = blue, linkcolor = blue,urlcolor = blue]{hyperref}
\onehalfspace

\DisableLigatures[f]{encoding = *, family = * }

\def\bf{\normalfont\bfseries}
\changenotsign

\theoremstyle{plain}
\newtheorem{theorem}{Theorem}
\newtheorem*{theorem*}{Theorem}

\theoremstyle{definition}
\newtheorem{definition}{Definition}

\newcommand{\ourrep}{}

\theoremstyle{remark}

\begin{document}

{\setstretch{1}
\title{
\LARGE{%
Two New Impossibility Results \\ for the Random Assignment Problem}\thanks{
Department of Informatics, University of Zurich, 8050 Zurich, Switzerland, mennle@ifi.uzh.ch. 
We would like to thank (in alphabetical order)
Umut Dur, 
Lars Ehlers, 
Albin Erlanson, 
Aris Filos-Ratsikas, 
and Steffen Schuldenzucker 
for fruitful discussions on this topic. 
Any errors remain our own. 
This research was supported by the Swiss National Science Foundation under grant \#156836.
}}
\author{%
Timo Mennle \\ University of Zurich
\and Sven Seuken \\ University of Zurich }
\date{First version: July 6, 2016 \\
This version: \today}
\maketitle

\begin{abstract} 
In this note, we prove two new impossibility results for random assignment mechanisms:
\citet{Bogomolnaia2001ANewSolution} showed that no assignment mechanism can satisfy \emph{strategyproofness}, \emph{ordinal efficiency}, and \emph{symmetry} at the same time, 
and \citet{MennleSeuken2017PSP_WP} gave a decomposition of strategyproofness into the axioms 
\emph{swap monotonicity}, \emph{upper invariance}, and \emph{lower invariance}. 
For our first impossibility result, we show that \emph{upper invariance}, \emph{lower invariance}, \emph{ordinal efficiency}, and \emph{symmetry} are incompatible. 
This refines the prior impossibility result because it relaxes swap monotonicity. 
For our second impossibility result, we show that no assignment mechanism satisfies 
\emph{swap monotonicity}, 
\emph{lower invariance},
\emph{ordinal efficiency}, 
\emph{anonymity}, 
\emph{neutrality}, 
and \emph{non-bossiness}. 
By contrasts, the \emph{Probabilistic Serial} (\emph{PS}) mechanism that \citet{Bogomolnaia2001ANewSolution} introduced, satisfies these axioms when lower invariance is replaced by upper invariance. 
It follows that there cannot exists a lower invariant counterpart to PS. 

\end{abstract}
\noindent \textbf{Keywords:} 
Assignment, 
Ordinal Efficiency, 
Strategyproofness,
Swap Monotonicity,  
Upper Invariance, 
Lower Invariance, 
Symmetry, 
Anonymity, 
Neutrality, 
Non-bossiness

\medskip
\noindent\textbf{JEL:} 
C78, 
D47
%
}
%

\section{Model}
\label{SEC:MODEL}
Let $N$ be a set of $n$ \emph{agents} and let $M$ be a set of $n$ \emph{objects}. 
Each agent $i \in N$ has a strict preference order $P_i$ over objects, where $P_i:a\succ b$ means that agent $i$ prefers object $a$ to object $b$.
The set of all preference orders is denoted by $\mathcal{P}$.
A \emph{preference profile} $\bm P = (P_1 , \ldots, P_n) \in \mathcal{P}^N$ is a collection of preference orders of all agents, and we denote by $P_{-i} = (P_1,\ldots, P_{i-1},P_{i+1},\ldots,P_n) \in \mathcal{P}^{N\backslash\{i\}}$ the collection of preference orders of all agents, except $i$.

A \emph{(random) assignment} is represented by a bi-stochastic matrix $x = (x_{i,j})_{i \in N, j \in M}$ (i.e., $x_{i,j} \in [0,1]$, $\sum_{i'\in N} x_{i',j} = 1$, and $\sum_{j'\in M} x_{i,j'} = 1$ for all $i\in N,j \in M$).
The entry $x_{i,j}$ is the probability that agent $i$ gets object $j$. 
An assignment is \emph{deterministic} if all agents get exactly one full object (i.e., $x_{i,j} \in \{0,1\}$ for all $i \in N, j \in M$).
For any agent $i$, the $i$th row $x_i = (x_{i,j})_{j \in M}$ of the matrix $x$ is called the \emph{assignment vector of $i$} (or \emph{$i$'s assignment} for short).
The Birkhoff-von Neumann Theorem and its extensions \citep{Budishetal2013DesignRandomAllocMechsTheoryAndApp} ensure that, given any random assignment, we can always find a lottery over deterministic assignments that implements these marginal probabilities.
Finally, let $X$ and $\Delta(X)$ denote the spaces of all deterministic and random assignments, respectively.

A \emph{(random assignment) mechanism} is a mapping $ \varphi: \mathcal{P}^N \rightarrow \Delta(X)$ that chooses an assignment based on a profile of reported preference orders.
$\varphi_i(P_i,P_{-i})$ is the assignment vector that agent $i$ receives if it reports $P_i$ and the other agents report $P_{-i}$.
A mechanism is \emph{deterministic} if it only selects deterministic assignments (i.e., $\varphi: \mathcal{P}^N \rightarrow X$).

\section{Properties of Mechanisms}
\label{SEC:PROPERTIES}
In this section, we present the axioms that are needed for the formulation of our impossibility results. 
For this purpose, let $\varphi$ be a mechanism, let $\bm P \in \mathcal{P}$ be a preference profile, let $i\in N$ be some agent, let $x,y \in \Delta(X)$ be two assignments, and let $x_i$ and $y_i$ be the assignment vector if $i$ in $x$ and $y$, respectively. 
\subsection{Ordinal Efficiency}
First, we define first order-stochastic dominance and ordinal efficiency, which is a refinement of ex-post efficiency.\footnote{An assignment is \emph{ex-post efficient} if it can be decomposed into Pareto undominated, deterministic assignments.} 
\begin{definition}[First Order-stochastic Dominance] 
\emph{$x_i$ first order-stochastically dominates $y_i$ at $P_i$} if, for all objects $j \in M$, 
\begin{equation}
	\sum_{j' \in M ~:~ P_i:j' \succ j } x_{i,j'} \geq \sum_{j' \in M ~:~ P_i:j' \succ j } y_{i,j'}. 
\label{EQ:FOSD_INEQU}
\end{equation}
This dominance is \emph{strict} if inequality (\ref{EQ:FOSD_INEQU}) is strict for at least one object $j\in M$. 

\emph{$x$ ordinally dominates $y$ at $\bm P$} if $x_i$ first order-stochastically dominates $y_i$ at $P_i$ for all agents $i\in N$, and this dominance is \emph{strict} if the dominance of $x_i$ over $y_i$ at $P_i$ is strict for at least one agent $i\in N$. 
\end{definition}
\begin{definition}[Ordinal Efficiency]
$x$ is \emph{ordinally efficient at $\bm P$} if there exists no other assignment $y \in \Delta(X)$ that strictly ordinally dominates $x$ at $\bm P$. 
$\varphi$ is \emph{ordinally efficient} if $\varphi(\bm P)$ is ordinally efficiency at $\bm P$ for all preference profiles $\bm P \in \mathcal{P}^N$. 
\end{definition}
\subsection{Incentive Properties}
In this section, we define the well known requirement of strategyproofness and three axioms into which strategyproofness can be decomposed. 
These axioms and the decomposition result were originally given in \citep{MennleSeuken2017PSP_WP}.  
\begin{definition}[Strategyproofness]
$\varphi$ is \emph{strategyproof} if, 
for 
all agents $i\in N$, 
all preference profiles $(P_i,P_{-i}) \in \mathcal{P}^N$, 
and all preference orders $P_i' \in N_{P_i}$, 
we have that $\varphi_i(P_i,P_{-i})$ first order-stochastically dominates $\varphi_i(P_i',P_{-i})$ at $P_i$. 
\end{definition}
The three axioms we define next restrict the way in which a mechanism can react to certain changes in an agent's preference report. 
Specifically, the axioms specify the mechanism's behavior when an agent inverts the order of two consecutively ranked objects. 
\begin{definition}[Neighborhood]
For any two preference orders $P, P' \in \mathcal{P}$ we say that $P$ and $P'$ are \emph{adjacent} if they differ by just a swap of two consecutive object; 
formally, 
\begin{eqnarray*}
P & : & j_1 \succ \ldots \succ j_{k-1} \succ  j_k \succ j_{k+1} \succ j_{k+2} \succ \ldots \succ j_m, \\
P' & : & j_1 \succ \ldots \succ j_{k-1} \succ j_{k+1} \succ j_{k} \succ j_{k+2} \succ \ldots \succ j_m.
\end{eqnarray*}
The set of all preference orders adjacent to $P$ is the \emph{neighborhood of $P$}, denoted $N_P$.
\end{definition}
\begin{definition}[Upper and Lower Contour Sets]
For an object $j\in M$ and a preference order $P\in \mathcal{P}$, the \emph{upper contour set $U(j,P)$} and the \emph{lower contour set $L(j,P)$ of $j$ at $P$} are the sets of objects that an agent with preference order $P$ strictly prefers to $j$ or likes strictly less than $j$, respectively; 
formally, $U(j,P) = \{j' \in M~|~P: j' \succ j \}$ and $L(j,P) = \{j' \in M~|~P: j \succ j' \}$.
\end{definition}
\begin{definition}[Swap Monotonicity]
$\varphi$ is \emph{swap monotonic} if, 
for all agents $i\in N$, 
all preference profiles $(P_i,P_{-i}) \in \mathcal{P}^N$, 
and all preference orders $P_i' \in N_{P_i}$ from the neighborhood of $P_i$ with $P_i: j \succ j'$ but $P_i': j' \succ j$, 
one of the following holds: 
\begin{itemize}
\setlength{\itemsep}{0pt}
	\item either $\varphi_i(P_i,P_{-i}) = \varphi_i(P_i',P_{-i})$, 
	\item or $\varphi_{i,j}(P_i,P_{-i}) > \varphi_{i,j}(P_i',P_{-i})$ and $\varphi_{i,j'}(P_i,P_{-i}) < \varphi_{i,j'}(P_i',P_{-i})$.
\end{itemize}
\end{definition}
\begin{definition}[Upper Invariance]
$\varphi$ is \emph{upper invariant} if 
for all agents $i\in N$, 
all preference profiles $(P_i,P_{-i}) \in \mathcal{P}^N$, 
and all preference orders $P_i' \in N_{P_i}$ from the neighborhood of $P_i$ with $P_i: j \succ j'$ but $P_i': j' \succ j$, 
we have that $i$'s assignment for objects from the upper contour set of $j$ does not change; 
formally
\begin{equation}
\varphi_{i,j''}(P_i,P_{-i}) = \varphi_{i,j''}(P_i',P_{-i})\text{ for all }j''\in U(j,P_i).
\end{equation} 
\end{definition}
\begin{definition}[Lower Invariance]
$\varphi$ is \emph{lower invariant} if, 
for all agents $i\in N$, 
all preference profiles $(P_i,P_{-i}) \in \mathcal{P}^N$, 
and all preference orders $P_i' \in N_{P_i}$ from the neighborhood of $P_i$ with $P_i: j \succ j'$ but $P_i': j' \succ j$, 
we have that $i$'s assignment for objects from the lower contour set of $j'$ does not change; 
formally
\begin{equation}
\varphi_{i,j''}(P_i,P_{-i}) = \varphi_{i,j''}(P_i',P_{-i})\text{ for all }j''\in L(j',P_i).
\end{equation} 
\end{definition}
%
%
%
\subsection{Fairness Properties}
In this section, we define several common fairness properties. 
\begin{definition}[Symmetry] 
$x$ is \emph{symmetric at $\bm P$} if $x_i = x_{i'}$ for all $i,i'\in N$ with $P_i=P_{i'}$. 
$\varphi$ is \emph{symmetric} if $\varphi(\bm P)$ is symmetric at $\bm P$ for any preference profile $\bm P \in \mathcal{P}^N$. 
\end{definition}
Symmetry is also sometimes referred to as equal treatment of equals. 
\begin{definition}[Anonymity] 
$\varphi$ is \emph{anonymous} if, for all pairs $(\bm P, \bm P') \in \mathcal{P}^N \times \mathcal{P}^N$ of preference profile with 
\begin{itemize}
\setlength{\itemsep}{0pt}
	\item $P_i = P_{i'}'$ and $P_{i'} = P_i'$ for two agents $i,i'\in N$, 
	\item $P_{i''} = P_{i''}'$ for all other agents $i'' \in N\backslash\{i,i'\}$, 
\end{itemize}
we have that 
$\varphi_i(\bm P) = \varphi_{i'}(\bm P')$. 
\end{definition}
Observe that anonymity implies symmetry but the converse does not hold. 
%
%
%
\subsection{Other Properties}
We now define two additional properties: \emph{neutrality} and \emph{non-bossiness}.
For a preference order $P \in \mathcal{P}$ and two objects $j,j' \in M$, let $P^{j\leftrightarrow j'}$ be the preference order such that
\begin{eqnarray*}
	P & : & j_1 \succ \ldots \succ j_{k-1} \succ j \succ j_{k+1} \succ \ldots \succ j_{k'-1} \succ j' \succ j_{k'+1} \succ \ldots \succ j_m, \\
	P^{j\leftrightarrow j'} & : & j_1 \succ \ldots \succ j_{k-1} \succ j' \succ j_{k+1} \succ \ldots \succ j_{k'-1} \succ j \succ j_{k'+1} \succ \ldots \succ j_m.
\end{eqnarray*}
In words, $P$ and $P^{j\leftrightarrow j'}$ coincide except that the objects $j$ and $j'$ have traded positions. 
For a preference profile $\bm P \in \mathcal{P}^N$ and two objects $j,j' \in M$ let $\bm{P^{j\leftrightarrow j'}}$ be the preference profile with the preference orders $P_i^{j\leftrightarrow j'}$ constructed in this way. 
\begin{definition}[Neutrality] 
$\varphi$ is \emph{neutral} if, 
for all preference profiles $\bm P \in \mathcal{P}^N$ and all pairs of objects objects $(j,j')\in M^2$, 
we have that 
$\varphi_{i,j}(\bm P) = \varphi_{i,j'}(\bm P^{j\leftrightarrow j'})$ for all $i \in N$.
\end{definition}
In words, under a neutral mechanism, the assignment is independent of the specific names of the objects. 
\begin{definition}[Non-bossiness] 
$\varphi$ is \emph{non-bossy} if, 
for all agents $i\in N$, 
all preference profiles $(P_i,P_{-i}) \in \mathcal{P}^N$, 
and all preference orders $P_i' \in N_{P_i}$ 
such that $\varphi_i(P_i,P_{-i}) = \varphi_i(P_i',P_{-i})$, 
we have that $\varphi(P_i,P_{-i}) = \varphi(P_i',P_{-i})$.
\end{definition}
\section{Prior Results}
\label{SEC:PRIOR}
In this section, we restate two prior results: 
our own decomposition of strategyproofness from \citep{MennleSeuken2017PSP_WP} and the impossibility result from \citep{Bogomolnaia2001ANewSolution} pertaining to the incompatibility of strategyproofness, ordinal efficiency, and symmetry. 
\begin{theorem*}[Theorem 1 in \citep{MennleSeuken2017PSP_WP}] 
A mechanism is strategyproof if and only if it is swap monotonic, upper invariant, and lower invariant.
\end{theorem*}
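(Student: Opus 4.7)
The plan is to prove the two implications separately. For the forward direction, fix an agent $i$, a profile $P_{-i}$, and a preference order $P_i' \in N_{P_i}$ that swaps consecutive objects $j \succ j'$ in $P_i$. Applying strategyproofness in both directions yields first order-stochastic dominance of $x := \varphi_i(P_i, P_{-i})$ over $y := \varphi_i(P_i', P_{-i})$ at $P_i$ and of $y$ over $x$ at $P_i'$. Since $U(j, P_i) = U(j', P_i')$, the two FOSD conditions collapse every cumulative inequality above the swap into an equality; taking successive differences yields $x_c = y_c$ on $U(j, P_i)$, which is upper invariance. The analogous cumulative argument carried out at the bottom of the ranking, where the upper contour sets in $P_i$ and $P_i'$ again coincide for every object below the swap, yields lower invariance on $L(j', P_i)$. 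On the remaining two coordinates, the row-sum constraint gives $x_j + x_{j'} = y_j + y_{j'}$, while the FOSD conditions applied at the cutoffs $j$ and $j'$ force $x_j \geq y_j$ and $y_{j'} \geq x_{j'}$; these constraints leave exactly the dichotomy demanded by swap monotonicity.

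For the converse, the adjacent-swap case is almost immediate: if $P_i' \in N_{P_i}$ exchanges $j \succ j'$, upper and lower invariance force $x$ and $y$ to agree on every coordinate outside $\{j, j'\}$, and swap monotonicity combined with the row-sum constraint shifts the remaining probability weakly from $j'$ to $j$, so checking cumulative sums at $P_i$ gives FOSD directly. The nontrivial step is to extend this local inequality to an arbitrary manipulation $P_i' \in \mathcal{P}$, not merely an adjacent swap. I would do this by choosing a bubble-sort path $P_i' = P^{(0)}, P^{(1)}, \ldots, P^{(L)} = P_i$ in which each step exchanges two objects $a, b$ adjacent in $P^{(l)}$ with $P^{(l)}: a \succ b$ but $P_i: b \succ a$, so every swap corrects an inversion relative to the target $P_i$.

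The heart of the argument is then to show, at each step, that $\varphi_i(P^{(l+1)}, P_{-i})$ FOSD-dominates $\varphi_i(P^{(l)}, P_{-i})$ \emph{at the target $P_i$}, rather than at the intermediate $P^{(l)}$. I expect this to be the main obstacle, because the adjacent objects $a, b$ at step $l$ may be arbitrarily far apart in $P_i$, so the cumulative ledger at $P_i$ must be inspected cutoff by cutoff. By the adjacent-swap case applied at $P^{(l)}$, the only coordinates that change between $P^{(l)}$ and $P^{(l+1)}$ are $a$ and $b$, and mass weakly shifts from $a$ to $b$. For any cutoff $c \in M$ in $P_i$: if both of $a, b$ are strictly above $c$ in $P_i$, the cumulative is unchanged because the total mass on $\{a, b\}$ is preserved; if both are weakly below $c$, neither contributes; and if exactly one is strictly above $c$, then since $P_i: b \succ a$ it must be $b$, so the cumulative at $c$ increases by exactly the nonnegative mass shifted onto $b$. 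Thus FOSD at $P_i$ holds at each of the $L$ steps, and transitivity of FOSD along the path yields $\varphi_i(P_i, P_{-i})$ FOSD $\varphi_i(P_i', P_{-i})$ at $P_i$, completing the proof.
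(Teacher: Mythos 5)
Your proposal is correct. Note that the paper does not actually prove this statement --- it restates it as a prior result from \citep{MennleSeuken2017PSP_WP} --- so there is no in-paper proof to compare against; judged on its own, your argument is sound and complete. The forward direction (pinning down equalities above and below the swap by playing the two FOSD inequalities at $P_i$ and $P_i'$ against each other on the shared upper contour sets, then extracting the swap-monotonicity dichotomy from the row-sum constraint and the two remaining cutoff inequalities) is exactly the right decomposition. One remark on the converse: the definition of strategyproofness given in this paper quantifies only over $P_i' \in N_{P_i}$, i.e.\ over adjacent misreports, so for the theorem \emph{as stated here} the adjacent-swap case of your converse already suffices. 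Your bubble-sort extension to arbitrary misreports --- correcting inversions one at a time and checking that each step weakly raises every cumulative sum at the target $P_i$, using that a step only moves mass from $a$ to the $P_i$-preferred $b$ --- is the standard local-to-global lifting (in the spirit of Carroll's local incentive-compatibility results) and establishes the stronger, global form of strategyproofness. The case analysis at each cutoff is handled correctly, in particular the observation that the ``exactly one of $a,b$ above the cutoff'' case can only occur with $b$ above, so the cumulative sum can only increase.
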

\begin{theorem*}[Theorem 2 in \citep{Bogomolnaia2001ANewSolution}] 
For $n\geq 4$, there exists no mechanism that satisfies 
strategyproofness, 
ordinal efficiency, 
and symmetry.
\end{theorem*}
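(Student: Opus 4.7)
The plan is a proof by contradiction for $n = 4$; the case $n > 4$ reduces to this by appending auxiliary agents and objects whose preference rankings force them to separate from the four-agent instance under symmetry, ordinal efficiency, and strategyproofness. Assume $\varphi$ satisfies all three properties.

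First, I would fix a preference profile $\bm P$ that is rich in symmetry: pair two agents with preference $a \succ b \succ c \succ d$ against two agents with preference $b \succ a \succ d \succ c$. Symmetry forces $\varphi_1(\bm P) = \varphi_2(\bm P)$ and $\varphi_3(\bm P) = \varphi_4(\bm P)$, and feasibility of $\varphi(\bm P)$ as a bi-stochastic matrix further restricts the unknowns. Ordinal efficiency then eliminates additional freedom by ruling out any assignment in which both $\varphi_{1,b}(\bm P)$ and $\varphi_{3,a}(\bm P)$ are strictly positive---a bilateral $a$-for-$b$ exchange between agents $1$ and $3$ (compensated by the symmetric exchange between agents $2$ and $4$) would then strictly ordinally dominate---and symmetrically for $\varphi_{1,d}(\bm P)$ and $\varphi_{3,c}(\bm P)$. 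This leaves only a small family of candidate assignments for $\varphi(\bm P)$.

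Second, I would construct a single-agent deviation $\bm P'$ from $\bm P$, ideally one in which the deviating agent swaps two adjacent objects in their report. The decomposition of strategyproofness into swap monotonicity, upper invariance, and lower invariance from the prior theorem of \citet{MennleSeuken2017PSP_WP} can then be invoked to transport precise information from $\varphi(\bm P)$ to $\varphi(\bm P')$: upper invariance preserves the deviating agent's probabilities on objects above the swapped pair, and lower invariance preserves them on objects below. Combining these invariances with ordinal efficiency and symmetry at $\bm P'$ pins down the admissible form of $\varphi(\bm P')$ for each candidate identified in the first step.

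Finally, strategyproofness itself demands that the deviating agent's truthful assignment first-order stochastically dominate the deviation assignment at the true preference. The last step is to compute the prefix sums of the two vectors over the relevant upper contour sets and exhibit one on which the required inequality fails for every admissible candidate. The main obstacle is the choice of profile and deviation: at the deviation profile the symmetry group is strictly smaller than at $\bm P$, so ordinal efficiency together with the two invariances must jointly carry enough weight to contradict strategyproofness. Locating a deviation whose combined constraints rule out \emph{every} candidate for $\varphi(\bm P)$ allowed in the first step is the delicate heart of the argument.
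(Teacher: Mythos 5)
Your proposal is a plan rather than a proof: the decisive step --- exhibiting a concrete deviation and the actual numerical contradiction --- is explicitly deferred (``the delicate heart of the argument''), so as written there is a genuine gap. Note also that the paper does not prove this statement directly; it cites it from \citet{Bogomolnaia2001ANewSolution} and instead proves the strictly stronger Theorem \ref{THM:RESULT_1} (upper invariance, lower invariance, ordinal efficiency, and symmetry are already incompatible), which implies the statement via the decomposition theorem. That proof does not work from your two-type profile: it chains \emph{eight} profiles starting from the fully uniform profile $a \succ b \succ c \succ d$ reported by all four agents, repeatedly using upper/lower invariance, ordinal efficiency, and symmetry to propagate exact probabilities until two derivation paths assign incompatible values ($1/6$ versus at most $1/12$) to a single entry. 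A single-deviation argument of the kind you sketch is unlikely to close.

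There is also a concrete overstatement in your first step. At the profile with agents $1,2: a \succ b \succ c \succ d$ and agents $3,4: b \succ a \succ d \succ c$, symmetry, feasibility, and ordinal efficiency do \emph{not} reduce the assignment to a small candidate set: writing $x_1 = x_2 = (\alpha,\beta,\gamma,\delta)$ and $x_3 = x_4 = (1/2-\alpha,\ldots)$, ordinal efficiency only forces ($\beta = 0$ or $\alpha = 1/2$) and ($\delta = 0$ or $\gamma = 1/2$), which leaves two one-parameter continua of ordinally efficient symmetric assignments (e.g., $x_1 = (\alpha, 0, 1/2, 1/2-\alpha)$ for any $\alpha \in [0,1/2]$ is ordinally efficient, since the induced trading relation is acyclic). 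Indeed, the paper's proof of Theorem \ref{THM:RESULT_2} needs anonymity \emph{and} neutrality, on top of ordinal efficiency, to pin down the assignment at exactly this profile --- neither is available to you here. To make your strategy work you would have to first pin down the starting assignment using strategyproofness across several profiles (as the paper does from the uniform profile), which turns your two-step plan into precisely the long multi-profile chain you were hoping to avoid.
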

%

%
%

\section{Two New Impossibility Results}

In \citep{MennleSeuken2017PSP_WP}, we have shown that an interesting relaxed notion of strategyproofness arises when lower invariance is dropped from strategyproofness. 
The \emph{Probabilistic Serial mechanism (PS)} \citep{Bogomolnaia2001ANewSolution} satisfies this relaxed incentive requirement. 
Specifically, it is 
swap monotonic, 
upper invariant, 
ordinally efficient, 
and symmetric. 
This raises the question whether other interesting relaxed notions of strategyproofness can be obtained by dropping either swap monotonicity or upper invariance instead. 

The following two impossibility results show that such relaxations do not admit the construction of new and appealing mechanisms. 
\subsection{An Impossibility Result without Swap Monotonicity}
\begin{theorem} 
\label{THM:RESULT_1}
For $n\geq 4$, there exists no mechanism that satisfies 
upper invariance, 
lower invariance, 
ordinal efficiency, 
and symmetry. 
\end{theorem}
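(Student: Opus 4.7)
The plan is to adapt the proof of the classical Bogomolnaia-Moulin impossibility by verifying that every appeal to strategyproofness there can be replaced by upper and lower invariance. The key observation is the following: if an agent $i$ changes her report from $P_i$ to an adjacent $P_i' \in N_{P_i}$ differing only by a swap of the consecutive objects $j \succ j'$ into $j' \succ j$, then upper invariance gives $\varphi_{i,j''}(P_i,P_{-i}) = \varphi_{i,j''}(P_i',P_{-i})$ for all $j'' \in U(j,P_i)$, lower invariance gives the same equality for all $j'' \in L(j',P_i)$, and together these cover $M \setminus \{j,j'\}$. Since row sums equal one, the sum $\varphi_{i,j}(P_i,P_{-i}) + \varphi_{i,j'}(P_i,P_{-i})$ is preserved as well. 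Chaining these equalities along any path of adjacent swaps in preference space links agent $i$'s assignment vectors between any two profiles that differ only in $i$'s report.

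Building on this, I would first fix a four-agent, four-object profile of Bogomolnaia-Moulin type, for instance agents $1, 2$ reporting $\pi^1 : a \succ b \succ c \succ d$ and agents $3, 4$ reporting $\pi^2 : b \succ a \succ d \succ c$. Symmetry reduces the assignment to two vectors $x = \varphi_1(\bm P) = \varphi_2(\bm P)$ and $y = \varphi_3(\bm P) = \varphi_4(\bm P)$ satisfying $2x + 2y = (1,1,1,1)$. A direct analysis of Pareto-improving perturbations shows that ordinal efficiency forces $(x_a = 1/2$ or $x_b = 0)$ together with $(x_c = 1/2$ or $x_d = 0)$, narrowing the candidate assignments to a small one-parameter family. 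I would then consider one, or a short chain of, adjacent-swap deviations by a single agent; the invariance observation transports these constraints to the post-deviation profile, where symmetry (now over the smaller group of still-tied agents) together with ordinal efficiency once more constrains the outcome. The contradiction is to emerge by combining the OE-derived constraints at both profiles with the UI+LI equalities linking the deviating agent's assignment vector across them.

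The main obstacle is that upper and lower invariance are strictly weaker than strategyproofness: they yield equalities on non-swapped coordinates and sum preservation on the swapped pair, but no directional information on the swapped pair itself. Since the Bogomolnaia-Moulin-style contradiction is driven by an equality-versus-feasibility tension rather than by a monotonicity argument, this weaker input should be enough, but the delicate step will be to identify a sequence of adjacent swaps whose induced linear system, together with OE and symmetry at every intermediate profile, is inconsistent --- in particular, the post-deviation profile is asymmetric and its OE analysis is necessarily more intricate than for the original. Finally, for $n \geq 5$ the result follows from the $n = 4$ case by the standard padding argument, in which $n - 4$ auxiliary agents and objects are adjoined with preferences that decouple them from the four-agent core subsystem.
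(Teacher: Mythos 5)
Your key observation is correct and is indeed the engine of the paper's argument: for an adjacent swap of $j$ and $j'$, upper invariance fixes the deviator's probabilities on $U(j,P_i)$, lower invariance fixes them on $L(j',P_i)$, these two sets cover $M\setminus\{j,j'\}$, and feasibility then preserves the sum $\varphi_{i,j}+\varphi_{i,j'}$. However, what you have written is a plan rather than a proof, and the gap you yourself flag as ``the delicate step'' --- identifying a concrete sequence of adjacent swaps whose induced linear system, together with ordinal efficiency and symmetry at every intermediate profile, is inconsistent --- is precisely the entire content of the paper's proof. The paper does not adapt Bogomolnaia--Moulin's argument at their profile ($1,2: a\succ b\succ c\succ d$; $3,4: b\succ a\succ d\succ c$). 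Instead it starts from the fully symmetric profile in which all four agents report $a\succ b\succ c\succ d$ and builds a directed web of eight profiles, reaching one of them (profile $V$, where agents $1,2$ report $a\succ b\succ c\succ d$, agent $3$ reports $a\succ b\succ d\succ c$, and agent $4$ reports $a\succ d\succ b\succ c$) by two different swap sequences. One route pins $\varphi_{3,c}$ at $1/6$; the other forces $\varphi_{3,c}\in[0,1/12]$; that is the contradiction. None of this construction, nor any substitute for it, appears in your proposal.

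There is also reason to doubt your opening premise that ``every appeal to strategyproofness'' in Bogomolnaia--Moulin's proof can be replaced by upper and lower invariance. Their deviation from $a\succ b\succ c\succ d$ to $b\succ a\succ d\succ c$ is a composition of two adjacent swaps, and after chaining your observation along that path the only surviving information is that the block sums $x_a+x_b$ and $x_c+x_d$ are preserved --- every individual coordinate may move. Bogomolnaia--Moulin exploit the directional, first-order-stochastic-dominance content of strategyproofness on the swapped pairs, which is exactly what is lost when swap monotonicity is dropped; this is presumably why the authors needed a longer and structurally different derivation. So the proposal is not merely incomplete in its bookkeeping: the route it points toward (transplanting the Bogomolnaia--Moulin contradiction) is unlikely to close without essentially reinventing the eight-profile construction. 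The padding argument for $n\geq 5$ at the end is fine and matches the paper.
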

\begin{proof}
Assume towards contradiction that $\varphi$ is an upper invariant, lower invariant, ordinally efficient, symmetric mechanism. 
Consider a setting with four agents $N=\{1,2,3,4\}$ and four objects $M = \{a,b,c,d\}$. 
To derive a contradiction, we derive the assignments that $\varphi$ must produce at various preference profiles. 
Figure \ref{FIG:THM1:PROOF_STRUCTURE} shows the order in which we consider these preference profiles. 
Arrows indicate that we use information about the assignment at one preference profile to infer information about the assignment at another and the lightning indicates the contradiction. 
\begin{figure}%
\begin{center}
\begin{tikzpicture}[
	->,
	>=stealth',
	shorten >=1pt,
	auto,
	node distance=2cm, 
	semithick]
  \tikzstyle{every state}=[fill=none,draw=black,text=black]

  \node[state] 				(I)          	    		   				{$I$};
  \node[state]        (II)  		[below left of=I] 		{$II$};
  \node[state]        (III) 		[below right of=II] 				{$III$};
  \node[state]        (IV)   		[below left of=II]      		{$IV$};
  \node[state]        (VII)  		[below right of=III]  {$VII$};
  \node[state]        (VI) 			[above right of=VII]  {$VI$};
  \node[state]        (V)  			[below left of=VII] 	{$V$};
  \node[state]        (VIII)  	[below right of=VII]	{$VIII$};

  \path (I) 		edge node {} (II)
								edge node {} (VI)
        (II) 		edge node {} (III)
								edge node {} (IV)
        (III) 	edge node {} (V)
								edge node {} (VII)
        (IV) 		edge node {} (V)
        (VI)  	edge node {} (VIII)
								edge node {} (VII)
        (VII) 	edge node {} (VIII)
        (V) 		edge[-] node {\large $\lightning$} (VIII);
\end{tikzpicture}
\end{center}
\caption{Order of derivation of assignments}%
\label{FIG:THM1:PROOF_STRUCTURE}%
\end{figure}
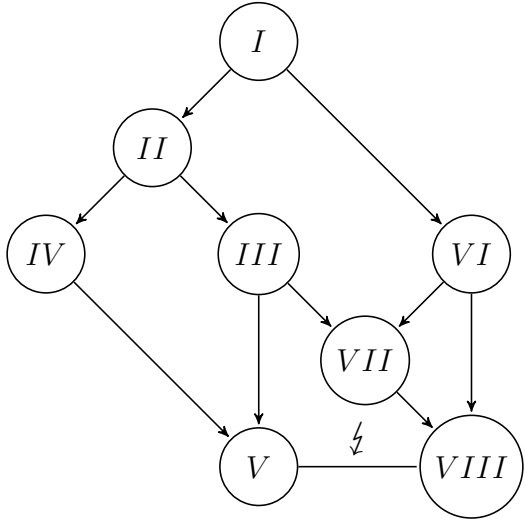

\begin{description}
	\item[$I$:] By symmetry, we have
		\begin{equation}
			\varphi(\bm P^{(I)}) = 
			\varphi\left(\begin{array}{c} 
				P_1^{(I)} ~:~ a \succ b \succ c \succ d \\ 
				P_2^{(I)} ~:~ a \succ b \succ c \succ d \\
				P_3^{(I)} ~:~ a \succ b \succ c \succ d \\
				P_4^{(I)} ~:~ a \succ b \succ c \succ d 
			\end{array}\right) 
			= 
			\left(\begin{array}{cccc}
				1/4 & 1/4 & 1/4 & 1/4 \\
				1/4 & 1/4 & 1/4 & 1/4 \\
				1/4 & 1/4 & 1/4 & 1/4 \\
				1/4 & 1/4 & 1/4 & 1/4  
			\end{array}\right).
		\end{equation}
	\item[$I \rightarrow II$:] If agent 4 swaps $c$ and $d$, we get 
		\begin{equation}
			\varphi(\bm P^{(II)}) = 
			\varphi\left(\begin{array}{c} 
				P_1^{(II)} ~:~ a \succ b \succ c \succ d \\ 
				P_2^{(II)} ~:~ a \succ b \succ c \succ d \\
				P_3^{(II)} ~:~ a \succ b \succ c \succ d \\
				P_4^{(II)} ~:~ a \succ b \succ d \succ c 
			\end{array}\right) 
			= 
			\left(\begin{array}{cccc}
				1/4 & 1/4 & 1/3 & 1/6 \\
				1/4 & 1/4 & 1/3 & 1/6 \\
				1/4 & 1/4 & 1/3 & 1/6 \\
				1/4 & 1/4 & 0 & 1/2  
			\end{array}\right),
		\end{equation}
		which follows from the following observations: 
		\begin{enumerate}
			\item 4's assignment for $a$ and $b$ cannot change by upper invariance. 
			\item By symmetry, the remaining probabilities for $a$ and $b$ must be evenly distributed across agents 1, 2, and 3. 
			\item If agent 4 had a strictly positive probability for $c$, agents 1, 2, and 3 would have zero probability for $d$, or else they could trade with 4 and improve their assignments in a first order-stochastic dominance sense. 
				However, this is not possible by ordinal efficiency. 
				Thus, $\varphi_{4,c}(\bm P^{(II)}) = 0$. 
			\item Since agent 4 has zero probability for $c$, it must receive $d$ with probability $1/2$ by feasibility. 
				The remaining probabilities for $c$ and $d$ must again be evenly distributed across agents 1, 2, and 3. 
		\end{enumerate}
	\item[$II \rightarrow III$:] Next, suppose that agent 4 swaps $b$ and $d$. 
		Then
		\begin{equation}
			\varphi(\bm P^{(III)}) = 
			\varphi\left(\begin{array}{c} 
				P_1^{(III)} ~:~ a \succ b \succ c \succ d \\ 
				P_2^{(III)} ~:~ a \succ b \succ c \succ d \\
				P_3^{(III)} ~:~ a \succ b \succ c \succ d \\
				P_4^{(III)} ~:~ a \succ d \succ b \succ c 
			\end{array}\right) 
			= 
			\left(\begin{array}{cccc}
				1/4 & 1/3 & 1/3 & 1/12 \\
				1/4 & 1/3 & 1/3 & 1/12 \\
				1/4 & 1/3 & 1/3 & 1/12 \\
				1/4 & 0 & 0 & 3/4  
			\end{array}\right)
		\end{equation}
		because
		\begin{enumerate}
			\item 4's assignment for $a$ and $c$ cannot change by upper and lower invariance. 
			\item If agent 4 had a strictly positive probability for $b$, agents 1, 2, and 3 would have positive probability for $d$, which contradicts ordinal efficiency. 
			\item This determines the assignment vector of agent 4. 
				By symmetry, the remaining probabilities have to be evenly distributed across agents 1, 2, and 3. 
		\end{enumerate}
	\item[$II \rightarrow IV$:] Going back to preference profile $\bm P^{(II)}$, suppose that agent 3 swaps $c$ and $d$. 
		Then
		\begin{equation}
			\varphi(\bm P^{(IV)}) = 
			\varphi\left(\begin{array}{c} 
				P_1^{(IV)} ~:~ a \succ b \succ c \succ d \\ 
				P_2^{(IV)} ~:~ a \succ b \succ c \succ d \\
				P_3^{(IV)} ~:~ a \succ b \succ d \succ c \\
				P_4^{(IV)} ~:~ a \succ b \succ d \succ c 
			\end{array}\right) 
			= 
			\left(\begin{array}{cccc}
				1/4 & 1/4 & 1/2 & 0 \\
				1/4 & 1/4 & 1/2 & 0 \\
				1/4 & 1/4 & 0 & 1/2 \\
				1/4 & 1/4 & 0 & 1/2  
			\end{array}\right)
		\end{equation}
		because
		\begin{enumerate}
			\item 3's assignment for $a$ and $b$ cannot change by upper invariance. 
			\item By symmetry, 4's assignment for $a$ and $b$ has to be $1/4$ each. 
			\item Completing the assignment matrix, 1's and 2's assignment for $a$ and $b$ must also be $1/4$ each. 
			\item If $3$ and 4 receive $c$ with positive probability, then $1$ and $2$ would receive $d$ with positive probability, which contradicts ordinal efficiency. 
		\end{enumerate}
	\item[$III \& IV \rightarrow V$:] 
		Now consider $\bm P^{(V)}$, which arises from $\bm P^{(III)}$ when agent 3 swaps $c$ and $d$. 
		Upper invariance implies that $\varphi_{3,a}(\bm P^{(V)}) = 1/4$ and $\varphi_{3,b}(\bm P^{(V)}) = 1/3$. 
		On the other hand, $\bm P^{(V)}$ also arises if agent 4 swaps $b$ and $d$ in $\bm P^{(IV)}$, and upper and lower invariance imply that $\varphi_{4,a}(\bm P^{(V)}) = 1/4$ and $\varphi_{4,c}(\bm P^{(V)}) = 0$. 
		Thus, 
		\begin{equation}
			\varphi(\bm P^{(V)}) = 
			\varphi\left(\begin{array}{c} 
				P_1^{(V)} ~:~ a \succ b \succ c \succ d \\ 
				P_2^{(V)} ~:~ a \succ b \succ c \succ d \\
				P_3^{(V)} ~:~ a \succ b \succ d \succ c \\
				P_4^{(V)} ~:~ a \succ d \succ b \succ c 
			\end{array}\right) 
			= 
			\left(\begin{array}{cccc}
				1/4 & 1/3 & 5/12 & 0 \\
				1/4 & 1/3 & 5/12 & 0 \\
				1/4 & 1/3 & 1/6 & 1/4 \\
				1/4 & 0 & 0 & 3/4  
			\end{array}\right). 
		\end{equation}
		To see the how the remaining entries are fixed, observe the following: 
		\begin{enumerate}
			\item Agent 4 cannot have positive probability for $b$. 
				Otherwise, some other agents would have positive probability for $d$, which contradicts ordinal efficiency. 
				This fixes $\varphi_{4,d}(\bm P^{(V)}) = 3/4$. 
			\item By symmetry, all agents must receive $a$ with probability $1/4$ and agents 1 and 2 must receive $b$ with probability $1/3$ each. 
			\item Agents 1 and 2 must have probability 0 for $d$. 
				Otherwise, ordinal efficiency would imply $\varphi_{3,c}(\bm P^{(V)}) = 0$ and thus $\varphi_{3,d}(\bm P^{(V)}) = 5/12$. 
				However, $\varphi_{3,d}(\bm P^{(V)}) + \varphi_{4,d}(\bm P^{(V)}) = 3/4 + 5/12 > 1$, a contradiction to feasibility. 
			\item The remaining entries follow from bi-stochasticity. 
		\end{enumerate}
	\item[$I \rightarrow VI$:] 
		Going back to $\bm P^{(I)}$, suppose that agent 3 swaps $a$ and $b$. 
		The resulting assignment is
		\begin{equation}
			\varphi(\bm P^{(VI)}) = 
			\varphi\left(\begin{array}{c} 
				P_1^{(VI)} ~:~ a \succ b \succ c \succ d \\ 
				P_2^{(VI)} ~:~ a \succ b \succ c \succ d \\
				P_3^{(VI)} ~:~ b \succ a \succ c \succ d \\
				P_4^{(VI)} ~:~ a \succ b \succ c \succ d 
			\end{array}\right) 
			= 
			\left(\begin{array}{cccc}
				1/3 & 1/6 & 1/4 & 1/4 \\
				1/3 & 1/6 & 1/4 & 1/4 \\
				0 & 1/2 & 1/4 & 1/4 \\
				1/3 & 1/6 & 1/4 & 1/4  
			\end{array}\right), 
		\end{equation}
		where the arguments are analogous to those for $I \rightarrow II$. 
	\item[$III \& VI \rightarrow VII$:] 
		Going back to $\bm P^{(III)}$, suppose that agent 3 swaps $a$ and $b$, then we get 
		\begin{equation}
			\varphi(\bm P^{(VII)}) = 
			\varphi\left(\begin{array}{c} 
				P_1^{(VII)} ~:~ a \succ b \succ c \succ d \\ 
				P_2^{(VII)} ~:~ a \succ b \succ c \succ d \\
				P_3^{(VII)} ~:~ b \succ a \succ c \succ d \\
				P_4^{(VII)} ~:~ a \succ d \succ b \succ c 
			\end{array}\right) 
			= 
			\left(\begin{array}{cccc}
				1/3 & 5/24 & 1/3 & 1/8 \\
				1/3 & 5/24 & 1/3 & 1/8 \\
				0 & 7/12 & 1/3 & 1/12 \\
				1/3 & 0 & 0 & 2/3  
			\end{array}\right). 
		\end{equation}
		To see why, observe the following: 
		\begin{enumerate}
			\item By lower invariance, 3's assignment for $c$ and $d$ cannot change relative to $\varphi_{3}(\bm P^{(III)})$. 
			\item Agent 3 may not get $a$ with positive probability; 
				otherwise, some other agent receives $b$ with positive probability, which contradicts ordinal efficiency. 
				This completely fixes the assignment of agent 3. 
			\item By ordinal efficiency and the fact that $\varphi_{3,d}(\bm P^{(VII)}) > 0$, agent 4 may not get $b$ or $c$ with positive probability. 
			\item If agent 4 swaps $d$ with $b$ and then with $c$, upper invariance requires that agent 4's assignment for $a$ does not change. 
				Since these changes induce preference profile $\bm P^{(VI)}$, we obtain $\varphi_{4,a}(\bm P^{(VII)}) = 1/3$. 
				This completely fixed the assignment of agent 4. 
			\item The assignments of 1 and 2 follow by symmetry. 
		\end{enumerate}
	\item[$VI \& VII \rightarrow VIII$:] 
		Starting with $\bm P^{(VII)}$, if agent 3 swaps $c$ and $d$, we get
		\begin{equation}
			\varphi(\bm P^{(VIII)}) = 
			\varphi\left(\begin{array}{c} 
				P_1^{(VIII)} ~:~ a \succ b \succ c \succ d \\ 
				P_2^{(VIII)} ~:~ a \succ b \succ c \succ d \\
				P_3^{(VIII)} ~:~ b \succ a \succ d \succ c \\
				P_4^{(VIII)} ~:~ a \succ d \succ b \succ c 
			\end{array}\right) 
			= 
			\left(\begin{array}{cccc}
				1/3 & 5/24  & 11/24 & 0 \\
				1/3 & 5/24  & 11/24 & 0 \\
				0   & 7/12 & x & 5/12-x \\
				1/3 & 0    & 1/12-x & 7/12+x
			\end{array}\right) 
		\end{equation}
		for some $x \in [0,1/12]$. 
		To see why, observe the following: 
		\begin{enumerate}
			\item By upper invariance, agent 3 retains the same probabilities for $a$ and $b$ as under $\bm P^{(VII)}$. 
			\item Starting at $\bm P^{(VI)}$, if agent 3 swaps $c$ and $d$, its assignment for $a$ cannot change by upper invariance, and by symmetry, agent 4 would still receive $a$ with probability $1/3$. 
				Again by upper invariance, agent 4 may then swap $d$ with $c$ and $b$, which cannot change agent 4's probability for $a$. 
				Thus, $\varphi_{4,a}(\bm P^{(VIII)}) = 1/3$. 
			\item If agent 4 had a positive probability for $b$, then some other agent would have positive probability for $d$, a contradiction to ordinal efficiency. 
				Thus, $\varphi_{4,b}(\bm P^{(VIII)}) = 0$. 
			\item The remaining probabilities for $a$ and $b$ follow by symmetry. 
			\item If agents 1 and 2 had positive probability for $d$, then neither 3 nor 4 could have positive probability for $c$ by ordinal efficiency. 
				Thus, agents 1 and 2 would have to absorb all probability for $c$ symmetrically. 
				However, each of them already receives either $a$ or $b$ with a total probability of $13/24$, which is greater than $1/2$, a contradiction. 
				Therefore, $\varphi_{1,d}(\bm P^{(VIII)}) = \varphi_{2,d}(\bm P^{(VIII)}) = 0$. 
			\item The assignment for $c$ of agents 1 and 2 is determined by their assignments for all other objects. 
			\item The remaining share of $c$ of $1/12$ must be absorbed by agents 3 and 4. 
				Specifically, let $\varphi_{3,c}(\bm P^{(VIII)}) = x$, where $x \in [0,1/12]$. 
		\end{enumerate}
\end{description}
Finally, observe that if agent 3 swaps $a$ and $b$ in $\bm P^{(VIII)}$, we arrive at $\bm P^{(V)}$. 
We get $\varphi_{3,c}(\bm P^{(V)}) = x \in [0,1/12]$ from upper invariance. 
However we previously observed that $\varphi_{3,c}(\bm P^{(V)}) = 1/6$, a contradiction. 

If there is a fifth agent ($5$, say) and a fifth object ($e$, say), we append $e$ at the ends of the preference orders of the agents $1, 2, 3, 4$ and let agent $5$ prefer object $e$ to all other objects. 
By ordinal efficiency, agent $5$ must receive object $e$ with certainty at all preference profiles, but the remaining relationships are unchanged. 
For any additional agents and objects, we proceed likewise. 
The contradiction can then be derived as in the case of four agents. 
\end{proof}
Theorem \ref{THM:RESULT_1} shows that no mechanism can be upper invariant, lower invariant, ordinally efficient, and symmetric. 
This refines the impossibility result of \citet{Bogomolnaia2001ANewSolution}, who showed that no mechanism can be strategyproof, ordinally efficient, and symmetric. 
Both upper and lower invariance are implied by strategyproofness but the opposite does not hold \citep{MennleSeuken2017PSP_WP}. 
For $n \leq 3$, ex-post efficiency coincides with ordinal efficiency. 
Since the Random Serial Dictatorship mechanism is 
strategyproof, 
ex-post efficient, 
and symmetric, 
it satisfies all requirements for $n\leq 3$. 
\subsection{An Impossibility Result without Upper Invariance}
In this section, we prove our second impossibility result. 
Recall that among swap monotonic and upper invariant mechanisms, PS is appealing as it satisfies 
ordinal efficiency, 
anonymity, 
neutrality, 
and non-bossiness. 
Our new result shows that it is impossible to design mechanisms \emph{like} PS which are swap monotonic and \emph{lower} invariant and have the same good axiomtic properties otherwise. 
\begin{theorem} 
\label{THM:RESULT_2}
For $n\geq 4$, there exists no mechanism that satisfies 
swap monotonicity, 
lower invariance, 
ordinal efficiency, 
anonymity, 
neutrality, 
and non-bossiness.
\end{theorem}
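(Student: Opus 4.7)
The plan is to mirror the architecture of the proof of Theorem \ref{THM:RESULT_1}: assume toward contradiction that $\varphi$ satisfies all six axioms, start from a fully symmetric profile with $n=4$ agents and four objects, and trace the mechanism's output through a directed graph of single-swap perturbations until two independent derivations at a common profile disagree on the probability of some entry. The padding trick at the end of the proof of Theorem \ref{THM:RESULT_1} (giving each additional agent a private top-ranked object and appending the new objects at the bottom of every existing preference order) extends any four-agent contradiction to arbitrary $n \geq 4$, so I would focus exclusively on the four-agent case.

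I would begin with the profile at which all four agents report $a \succ b \succ c \succ d$; anonymity (which, as noted in the excerpt, implies symmetry) then forces every agent to receive each object with probability $1/4$. From there I would perturb the profile one adjacent swap at a time and maintain a small toolkit at each step: lower invariance locks the probabilities of all objects below the swapped pair for the swapping agent; swap monotonicity gives the dichotomy that either the agent's entire assignment vector is unchanged, in which case non-bossiness promotes this to invariance of the whole assignment matrix, or probability moves strictly between the two swapped objects; neutrality, applied at profiles that are self-symmetric under an object relabeling, equates the corresponding columns of the assignment and so pins down the split of probability among objects that the other agents treat symmetrically; anonymity equates the rows of agents with identical reports, playing the role that symmetry played in the proof of Theorem \ref{THM:RESULT_1}; and ordinal efficiency forbids configurations in which an agent who ranks some object highly leaves probability on it for an agent who ranks it lower, because the two could trade and strictly Pareto-improve in the first-order stochastic dominance sense.

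The main obstacle, and the reason the proof cannot be copied verbatim from Theorem \ref{THM:RESULT_1}, is the loss of upper invariance: when an agent swaps two low-ranked objects the probabilities on the higher-ranked objects are no longer automatically frozen. I would compensate by combining swap monotonicity, non-bossiness, and neutrality. When the swap does change the swapping agent's assignment on the swapped pair, non-bossiness together with the ``strictly moves probability between $j$ and $j'$'' branch of swap monotonicity implies that all other entries of that agent's row are unchanged, and the resulting reshuffle among the other agents in the two swapped columns can then be fixed by anonymity and neutrality applied to the sub-profile of the other agents, which is symmetric under the relevant object relabeling. When the swap does not change the swapping agent, non-bossiness kills all movement in the matrix, and this equality combined with ordinal efficiency typically forces an inefficient configuration that rules the ``equal'' branch out. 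The delicate step is to pick the right target profile and the two swap sequences reaching it so that these ingredients force two different numerical values for one entry; the rest of the argument, analogous to the bookkeeping in Figure \ref{FIG:THM1:PROOF_STRUCTURE}, is routine once this choice is made.
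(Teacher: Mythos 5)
Your outline reproduces the general architecture of the paper's argument (assume the six axioms, work with $n=4$, chase assignments through a chain of adjacent swaps until two derivations collide, and use the private-object padding trick for $n>4$), but it is not a proof: the entire mathematical content of the theorem lives in the ``delicate step'' you explicitly defer, namely exhibiting a concrete sequence of profiles at which the axioms pin down enough entries to produce an inconsistency. The paper does this with a specific seven-profile chain, starting not from the all-identical profile you propose but from the profile $(a \succ b \succ c \succ d,\ a \succ b \succ c \succ d,\ b \succ a \succ d \succ c,\ b \succ a \succ d \succ c)$, which is invariant under simultaneously exchanging the two agent pairs and relabeling $a \leftrightarrow b$, $c \leftrightarrow d$; anonymity plus neutrality then force the assignment to be the permutation of itself under that relabeling, and ordinal efficiency collapses it to a $0$--$1/2$ pattern. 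From there the paper repeatedly uses the following lever: when an agent holds an object with probability $0$, swapping that object downward cannot decrease its probability further, so swap monotonicity forces the agent's entire row to be unchanged, and non-bossiness then freezes the whole matrix; this lets an agent push such an object to the bottom of its report for free. Combined with lower invariance and ordinal efficiency this eventually yields $\varphi_{4,d}=1$ together with $\varphi_{4,c}=1/4$, an infeasible row. Without such a concrete chain your argument establishes nothing.

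There is also a substantive error in your toolkit. You assert that in the ``strictly moves probability'' branch of swap monotonicity, non-bossiness implies that all other entries of the swapping agent's row are unchanged. That is false: non-bossiness only has content when the agent's \emph{entire} assignment vector is unchanged, and swap monotonicity's strict branch constrains only the two swapped objects. The entries ranked above the swapped pair are exactly what upper invariance would freeze, and upper invariance is the axiom this theorem drops. So the step you rely on to ``compensate for the loss of upper invariance'' is precisely the inference that is unavailable, and the difficulty it was meant to resolve remains open in your proposal. (The paper circumvents it by only ever invoking the \emph{equal}-assignment branch of swap monotonicity, in situations where the swapped-down object already has probability $0$, and by recovering upper-row information indirectly via anonymity, neutrality, and feasibility at symmetric profiles.)
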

\begin{proof}
Assume towards contradiction that $\varphi$ is a 
swap monotonic, 
lower invariant, 
ordinally efficient, 
anonymous, 
neutral, 
and non-bossy mechanism.
Consider a setting with four agents $N=\{1,2,3,4\}$ and four objects $M = \{a,b,c,d\}$. 
To derive a contradiction, we derive the assignments that $\varphi$ must produce at various preference profiles. 
Figure \ref{FIG:THM2:PROOF_STRUCTURE} shows the order in which we consider these preference profiles. 
Arrows indicate that we use information about the assignment at one preference profile to infer information about the assignment at another and the lightning indicates the contradiction. 
\begin{figure}%
\begin{center}
\begin{tikzpicture}[
	->,
	>=stealth',
	shorten >=1pt,
	auto,
	node distance=2cm, 
	semithick]
  \tikzstyle{every state}=[fill=none,draw=black,text=black]

  \node[state] 				(I)          	    		   				{$I$};
  \node[state]        (II)  		[right of=I] 					{$II$};
  \node[state]        (III) 		[below of=I]	 				{$III$};
	\node[state]        (IV) 			[below of=II]	 				{$IV$};
	\node[state]        (V) 			[right of=IV]	 				{$V$};
	\node[state]        (VI) 			[right of=V]	 				{$VI$};
	\node[state]        (VII) 			[right of=VI]	 				{$VII$};

  \path (I) 		edge node {} (II)
        (II) 		edge node {} (IV)
        (III) 	edge node {} (IV)
        (IV) 		edge node {} (V)
        (V)  		edge node {} (VI)
        (VI) 		edge node {} (VII)
				(VII) 	edge [loop right] node {\large $\lightning$} ();
\end{tikzpicture}
\end{center}
\caption{Order of derivation of assignments}%
\label{FIG:THM2:PROOF_STRUCTURE}%
\end{figure}
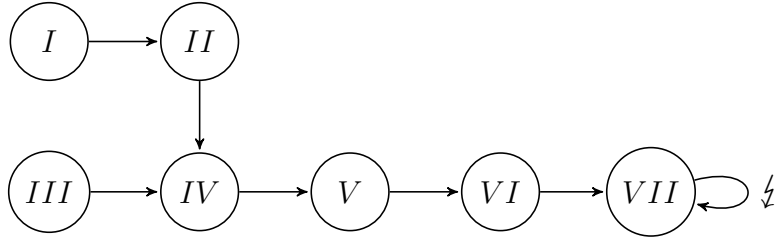
\begin{description}
	\item[$I$:] First, we show that
		\begin{equation}
			\varphi(\bm P^{(I)}) = 
			\varphi\left(\begin{array}{c} 
				P_1^{(I)} ~:~ a \succ b \succ c \succ d \\ 
				P_2^{(I)} ~:~ a \succ b \succ c \succ d \\
				P_3^{(I)} ~:~ b \succ a \succ d \succ c \\
				P_4^{(I)} ~:~ b \succ a \succ d \succ c 
			\end{array}\right) 
			= 
			\left(\begin{array}{cccc}
				1/2 & 0 & 1/2 & 0 \\
				1/2 & 0 & 1/2 & 0 \\
				0 & 1/2 & 0 & 1/2 \\
				0 & 1/2 & 0 & 1/2  
			\end{array}\right).
		\end{equation}
		Since anonymity implies symmetry, we find that the assignment must have the form
		\begin{equation}
			\varphi(\bm P^{(I)}) = 
			\varphi\left(\begin{array}{c} 
				P_1^{(I)} ~:~ a \succ b \succ c \succ d \\ 
				P_2^{(I)} ~:~ a \succ b \succ c \succ d \\
				P_3^{(I)} ~:~ b \succ a \succ d \succ c \\
				P_4^{(I)} ~:~ b \succ a \succ d \succ c 
			\end{array}\right) 
			= 
			\left(\begin{array}{cccc}
				x_{1,a} & x_{1,b} & x_{1,c} & x_{1,d} \\
				x_{1,a} & x_{1,b} & x_{1,c} & x_{1,d} \\
				x_{3,a} & x_{3,b} & x_{3,c} & x_{3,d} \\
				x_{3,a} & x_{3,b} & x_{3,c} & x_{3,d} 
			\end{array}\right).
		\end{equation}
		By anonymity, agents 1 and 2 may exchange their preference orders with agent 3 and 4. 
		The new assignment must have the form
		\begin{equation}
			\varphi(\bm P^{(I')}) = 
			\varphi\left(\begin{array}{c} 
				P_1^{(I')} ~:~ b \succ a \succ d \succ c \\
				P_2^{(I')} ~:~ b \succ a \succ d \succ c \\
				P_3^{(I')} ~:~ a \succ b \succ c \succ d \\ 
				P_4^{(I')} ~:~ a \succ b \succ c \succ d 
			\end{array}\right) 
			= 
			\left(\begin{array}{cccc}
				x_{3,a} & x_{3,b} & x_{3,c} & x_{3,d} \\
				x_{3,a} & x_{3,b} & x_{3,c} & x_{3,d} \\
				x_{1,a} & x_{1,b} & x_{1,c} & x_{1,d} \\
				x_{1,a} & x_{1,b} & x_{1,c} & x_{1,d} 
			\end{array}\right).
		\end{equation}
		By neutrality, we can rename objects as follows: $a \rightsquigarrow b$, $b \rightsquigarrow a$, $c \rightsquigarrow d$, and $d \rightsquigarrow c$. 
		The new assignment must have the form
		\begin{equation}
			\varphi(\bm P^{(I'')}) = 
			\varphi\left(\begin{array}{c} 
				P_1^{(I'')} ~:~ a \succ b \succ c \succ d \\
				P_2^{(I'')} ~:~ a \succ b \succ c \succ d \\
				P_3^{(I'')} ~:~ b \succ a \succ d \succ c \\ 
				P_4^{(I'')} ~:~ b \succ a \succ d \succ c 
			\end{array}\right) 
			= 
			\left(\begin{array}{cccc}
				x_{3,b} & x_{3,a} & x_{3,d} & x_{3,c} \\
				x_{3,b} & x_{3,a} & x_{3,d} & x_{3,c} \\
				x_{1,b} & x_{1,a} & x_{1,d} & x_{1,c} \\
				x_{1,b} & x_{1,a} & x_{1,d} & x_{1,c} 
			\end{array}\right).
		\end{equation}
		But since $\bm P^{(I)} = \bm P^{(I'')}$ we must have $x_{1,a} = x_{3,b}$, $x_{1,b} = x_{3,a}$, $x_{1,c} = x_{3,d}$, and $x_{1,d} = x_{3,c}$. 
		If $x_{3,c} > 0$, then $x_{1,d} > 0$, which violates ordinal efficiency. 
		Therefore, $x_{1,b} = x_{1,d} = x_{3,a} = x_{3,c} = 0$. 
		The remaining entries follow from symmetric distribution of the probabilities to the respective agents. 
	\item[$I \rightarrow II$:] Starting with $\bm P^{(I)}$ let agent 3 swap $a$ down in until it has reached the last position. 
		Since agent 3's probability for receiving $a$ is already 0, none of these swaps can further reduce this probability. 
		By swap monotonicity, none of the swaps can therefore change the agent 3's assignment at all. 
		By non-bossiness, the assignment remains unchanged for all agents. 
		Similarly, agent 4 may rank $a$ last without changing the assignment. 
		We obtain 
		\begin{equation}
			\varphi(\bm P^{(II)}) = 
			\varphi\left(\begin{array}{c} 
				P_1^{(II)} ~:~ a \succ b \succ c \succ d \\ 
				P_2^{(II)} ~:~ a \succ b \succ c \succ d \\
				P_3^{(II)} ~:~ b \succ d \succ c \succ a \\
				P_4^{(II)} ~:~ b \succ d \succ c \succ a  
			\end{array}\right) 
			= 
			\left(\begin{array}{cccc}
				1/2 & 0 & 1/2 & 0 \\
				1/2 & 0 & 1/2 & 0 \\
				0 & 1/2 & 0 & 1/2 \\
				0 & 1/2 & 0 & 1/2  
			\end{array}\right).
		\end{equation}
	\item[$III$:] The arguments to show that 
			\begin{equation}
			\varphi(\bm P^{(III)}) = 
			\varphi\left(\begin{array}{c} 
				P_1^{(III)} ~:~ a \succ b \succ c \succ d \\ 
				P_2^{(III)} ~:~ a \succ b \succ c \succ d \\
				P_3^{(III)} ~:~ d \succ b \succ c \succ a \\
				P_4^{(III)} ~:~ d \succ b \succ c \succ a 
			\end{array}\right) 
			= 
			\left(\begin{array}{cccc}
				1/2 & 1/4 & 1/4 & 0 \\
				1/2 & 1/4 & 1/4 & 0 \\
				0 & 1/4 & 1/4 & 1/2 \\
				0 & 1/4 & 1/4 & 1/2  
			\end{array}\right)
		\end{equation}
		are the same as those used to derive the assignment for $\bm P^{(I)}$. 
	\item[$II \& III \rightarrow IV$:] Next, we show that
			\begin{equation}
			\varphi(\bm P^{(IV)}) = 
			\varphi\left(\begin{array}{c} 
				P_1^{(IV)} ~:~ a \succ b \succ c \succ d \\ 
				P_2^{(IV)} ~:~ a \succ b \succ c \succ d \\
				P_3^{(IV)} ~:~ b \succ d \succ c \succ a \\
				P_4^{(IV)} ~:~ d \succ b \succ c \succ a 
			\end{array}\right) 
			= 
			\left(\begin{array}{cccc}
				1/2 & 1/8 & 3/8 & 0 \\
				1/2 & 1/8 & 3/8 & 0 \\
				0 & 3/4 & 1/4 & 0 \\
				0 & 0 & 0 & 1  
			\end{array}\right).
		\end{equation}
		This follows from the following observations:
		\begin{enumerate}
			\item $\bm P^{(IV)}$ arises from $\bm P^{(III)}$ when agent 3 swaps $d$ and $b$. 
				By lower invariance, agent 3's assignment for $a$ and $c$ may not change, so that 
				$\varphi_{3,a}(\bm P^{(IV)}) = 0$ and $\varphi_{3,c}(\bm P^{(IV)}) = 1/4$. 
			\item Similarly, $\bm P^{(IV)}$ arises from $\bm P^{(II)}$ when agent 4 swaps $b$ and $d$, and by lower invariance we get $\varphi_{4,a}(\bm P^{(IV)}) = 0$ and $\varphi_{4,c}(\bm P^{(IV)}) = 0$. 
			\item Agents 1 and 2 have no probability for receiving $d$. 
				Otherwise, agent 3 would trade its probability for $c$, a contradiction to ordinal efficiency. 
			\item If agent 4 had positive probability for $b$, ordinal efficiency would imply that agent 3 has no probability for $d$. 
				But then 4 would receive $d$ with probability 1, which contradicts the assumption that agent 4 has positive probability for $b$. 
				Thus, $\varphi_{4,b}(\bm P^{(IV)}) = 0$, which implies $\varphi_{4,d}(\bm P^{(IV)}) = 1$.
			\item Bi-stochasticity and symmetry imply the remaining probabilities. 
		\end{enumerate}
	\item[$IV \rightarrow V$:] Starting with $\bm P^{(IV)}$, let agent 3 rank $d$ in the last position. 
		Similar to the case $I \rightarrow II$, this does not change the assignment for anyone (by swap monotonicity and non-bossiness), so we get
		\begin{equation}
			\varphi(\bm P^{(V)}) = 
			\varphi\left(\begin{array}{c} 
				P_1^{(V)} ~:~ a \succ b \succ c \succ d \\ 
				P_2^{(V)} ~:~ a \succ b \succ c \succ d \\
				P_3^{(V)} ~:~ b \succ c \succ a \succ d \\
				P_4^{(V)} ~:~ d \succ b \succ c \succ a 
			\end{array}\right) 
			= 
			\left(\begin{array}{cccc}
				1/2 & 1/8 & 3/8 & 0 \\
				1/2 & 1/8 & 3/8 & 0 \\
				0 & 3/4 & 1/4 & 0 \\
				0 & 0 & 0 & 1  
			\end{array}\right).
		\end{equation}
	\item[$V \rightarrow VI$:] Starting with $\bm P^{(V)}$, let agent 3 swap $c$ and $a$. 
		We show that 
		\begin{equation}
			\varphi(\bm P^{(VI)}) = 
			\varphi\left(\begin{array}{c} 
				P_1^{(VI)} ~:~ a \succ b \succ c \succ d \\ 
				P_2^{(VI)} ~:~ a \succ b \succ c \succ d \\
				P_3^{(VI)} ~:~ b \succ a \succ c \succ d \\
				P_4^{(VI)} ~:~ d \succ b \succ c \succ a 
			\end{array}\right) 
			= 
			\left(\begin{array}{cccc}
				1/2 & 1/8 & 3/8 & 0 \\
				1/2 & 1/8 & 3/8 & 0 \\
				0 & 3/4 & 1/4 & 0 \\
				0 & 0 & 0 & 1  
			\end{array}\right).
		\end{equation}	
		This follows from the following observations:
		\begin{enumerate}
			\item By lower invariance, agent 3 gets $d$ with probability 0. 
			\item If agents 1 and 2 received $d$ with any positive probability, they could trade with agent 4, a contradiction to ordinal efficiency. 
				Thus, $\varphi_{1,d}(\bm P^{(VI)}) = \varphi_{2,d}(\bm P^{(VI)}) = 0$, $\varphi_{4,a}(\bm P^{(VI)}) = \varphi_{4,b}(\bm P^{(VI)}) = \varphi_{4,c}(\bm P^{(VI)}) = 0$, and $\varphi_{4,d}(\bm P^{(VI)}) = 1$. 
			\item If agent 3 received $a$ with positive probability, it could trade with agents 1 and 2 for probability for $b$, which contradicts ordinal efficiency. 
				Thus, $\varphi_{3,a}(\bm P^{(VI)}) = 0$. 
			\item Observe that the swap of $a$ and $c$ by agent 3 had no effect on 3's probability for obtaining $a$. 
				By swap monotonicity, 3's assignment can not change at all, which yields 
				$\varphi_{3,b}(\bm P^{(VI)}) = 3/4$ and $\varphi_{3,c}(\bm P^{(VI)}) = 1/4$. 
			\item The remaining probabilities are distributed symmetrically to agents 1 and 2.
		\end{enumerate}
	\item[$VI \rightarrow VII$:] Starting with $\bm P^{(VI)}$, let agent agent 3 swap $b$ and $a$, such that 
		\begin{equation}
			\bm P^{(VII)} = \left(\begin{array}{c} 
				P_1^{(VII)} ~:~ a \succ b \succ c \succ d \\ 
				P_2^{(VII)} ~:~ a \succ b \succ c \succ d \\
				P_3^{(VII)} ~:~ a \succ b \succ c \succ d \\
				P_4^{(VII)} ~:~ d \succ b \succ c \succ a 
			\end{array}\right). 
		\end{equation}
		By lower invariance, $\varphi_{3,c}(\bm P^{(VII)}) = 1/4$ and $\varphi_{3,d}(\bm P^{(VII)}) = 0$. 
		Symmetry implies $\varphi_{1,c}(\bm P^{(VII)}) = \varphi_{2,c}(\bm P^{(VII)}) = 1/4$ and $\varphi_{1,d}(\bm P^{(VII)}) = \varphi_{2,d}(\bm P^{(VII)}) = 0$ as well. 
		Thus $\varphi_{4,d}(\bm P^{(VII)}) = 1$ and $\varphi_{4,c}(\bm P^{(VII)}) = 1/4$, which is infeasible , a contradiction. 
\end{description}
The extension to more than 4 agents and objects is analogous to the same extension in Theorem \ref{THM:RESULT_1}. 
\end{proof}


\end{document}